\documentclass[pra,aps,nopacs,onecolumn,twoside,superscriptaddress]{revtex4}



\usepackage{amsmath,amsfonts,amssymb,caption,color,epsfig,graphics,graphicx,hyperref,latexsym,mathrsfs,revsymb,theorem,url,verbatim,epstopdf,mathtools,enumerate,float}
\usepackage[T1]{fontenc}
\usepackage{tikz}
\usepackage{graphicx}
\usepackage{varwidth}
\usepackage{algorithmic}
\usepackage{subcaption}

\usetikzlibrary{positioning, shapes.geometric}

\hypersetup{colorlinks,linkcolor={blue},citecolor={blue},urlcolor={red}}

\newtheorem{definition}{Definition}
\newtheorem{proposition}[definition]{Proposition}
\newtheorem{lemma}[definition]{Lemma}

\newtheorem{theorem}[definition]{Theorem}
\newtheorem{corollary}[definition]{Corollary}
\newtheorem{conjecture}[definition]{Conjecture}

\newtheorem{remark}[definition]{Remark}
\newtheorem{example}[definition]{Example}
\newtheorem{question}[definition]{Question}
\newtheorem{memo}[definition]{Memo}


\def\squareforqed{\hbox{\rlap{$\sqcap$}$\sqcup$}}
\def\qed{\ifmmode\squareforqed\else{\unskip\nobreak\hfil
\penalty50\hskip1em\null\nobreak\hfil\squareforqed
\parfillskip=0pt\finalhyphendemerits=0\endgraf}\fi}
\def\endenv{\ifmmode\;\else{\unskip\nobreak\hfil
\penalty50\hskip1em\null\nobreak\hfil\;
\parfillskip=0pt\finalhyphendemerits=0\endgraf}\fi}
\newenvironment{proof}{\noindent \textbf{{Proof.~} }}{\qed}
\def\Dbar{\leavevmode\lower.6ex\hbox to 0pt
{\hskip-.23ex\accent"16\hss}D}
\makeatletter
\def\url@leostyle{%
  \@ifundefined{selectfont}{\def\UrlFont{\sf}}{\def\UrlFont{\small\ttfamily}}}
\makeatother
\urlstyle{leo}

\def\bcj{\begin{conjecture}}
\def\ecj{\end{conjecture}}
\def\bcr{\begin{corollary}}
\def\ecr{\end{corollary}}
\def\bd{\begin{definition}}
\def\ed{\end{definition}}
\def\bea{\begin{eqnarray}}
\def\eea{\end{eqnarray}}
\def\bem{\begin{enumerate}}
\def\eem{\end{enumerate}}
\def\bex{\begin{example}}
\def\eex{\end{example}}
\def\bim{\begin{itemize}}
\def\eim{\end{itemize}}
\def\bl{\begin{lemma}}
\def\el{\end{lemma}}
\def\bma{\begin{bmatrix}}
\def\ema{\end{bmatrix}}
\def\bpf{\begin{proof}}
\def\epf{\end{proof}}
\def\bpp{\begin{proposition}}
\def\epp{\end{proposition}}
\def\bqu{\begin{question}}
\def\equ{\end{question}}
\def\br{\begin{remark}}
\def\er{\end{remark}}
\def\bt{\begin{theorem}}
\def\et{\end{theorem}}
\def\bmm{\begin{memo}}
\def\emm{\end{memo}}

\def\btb{\begin{tabular}}
\def\etb{\end{tabular}}

\newcommand{\nc}{\newcommand}


 \nc{\bbA}{\mathbb{A}} \nc{\bbB}{\mathbb{B}} \nc{\bbC}{\mathbb{C}}
 \nc{\bbD}{\mathbb{D}} \nc{\bbE}{\mathbb{E}} \nc{\bbF}{\mathbb{F}}
 \nc{\bbG}{\mathbb{G}} \nc{\bbH}{\mathbb{H}} \nc{\bbI}{\mathbb{I}}
 \nc{\bbJ}{\mathbb{J}} \nc{\bbK}{\mathbb{K}} \nc{\bbL}{\mathbb{L}}
 \nc{\bbM}{\mathbb{M}} \nc{\bbN}{\mathbb{N}} \nc{\bbO}{\mathbb{O}}
 \nc{\bbP}{\mathbb{P}} \nc{\bbQ}{\mathbb{Q}} \nc{\bbR}{\mathbb{R}}
 \nc{\bbS}{\mathbb{S}} \nc{\bbT}{\mathbb{T}} \nc{\bbU}{\mathbb{U}}
 \nc{\bbV}{\mathbb{V}} \nc{\bbW}{\mathbb{W}} \nc{\bbX}{\mathbb{X}}
 \nc{\bbZ}{\mathbb{Z}}


 \nc{\bA}{{\bf A}} \nc{\bB}{{\bf B}} \nc{\bC}{{\bf C}}
 \nc{\bD}{{\bf D}} \nc{\bE}{{\bf E}} \nc{\bF}{{\bf F}}
 \nc{\bG}{{\bf G}} \nc{\bH}{{\bf H}} \nc{\bI}{{\bf I}}
 \nc{\bJ}{{\bf J}} \nc{\bK}{{\bf K}} \nc{\bL}{{\bf L}}
 \nc{\bM}{{\bf M}} \nc{\bN}{{\bf N}} \nc{\bO}{{\bf O}}
 \nc{\bP}{{\bf P}} \nc{\bQ}{{\bf Q}} \nc{\bR}{{\bf R}}
 \nc{\bS}{{\bf S}} \nc{\bT}{{\bf T}} \nc{\bU}{{\bf U}}
 \nc{\bV}{{\bf V}} \nc{\bW}{{\bf W}} \nc{\bX}{{\bf X}}
 \nc{\bZ}{{\bf Z}}


\nc{\cA}{{\cal A}} \nc{\cB}{{\cal B}} \nc{\cC}{{\cal C}}
\nc{\cD}{{\cal D}} \nc{\cE}{{\cal E}} \nc{\cF}{{\cal F}}
\nc{\cG}{{\cal G}} \nc{\cH}{{\cal H}} \nc{\cI}{{\cal I}}
\nc{\cJ}{{\cal J}} \nc{\cK}{{\cal K}} \nc{\cL}{{\cal L}}
\nc{\cM}{{\cal M}} \nc{\cN}{{\cal N}} \nc{\cO}{{\cal O}}
\nc{\cP}{{\cal P}} \nc{\cQ}{{\cal Q}} \nc{\cR}{{\cal R}}
\nc{\cS}{{\cal S}} \nc{\cT}{{\cal T}} \nc{\cU}{{\cal U}}
\nc{\cV}{{\cal V}} \nc{\cW}{{\cal W}} \nc{\cX}{{\cal X}}
\nc{\cZ}{{\cal Z}}


\nc{\hA}{{\hat{A}}} \nc{\hB}{{\hat{B}}} \nc{\hC}{{\hat{C}}}
\nc{\hD}{{\hat{D}}} \nc{\hE}{{\hat{E}}} \nc{\hF}{{\hat{F}}}
\nc{\hG}{{\hat{G}}} \nc{\hH}{{\hat{H}}} \nc{\hI}{{\hat{I}}}
\nc{\hJ}{{\hat{J}}} \nc{\hK}{{\hat{K}}} \nc{\hL}{{\hat{L}}}
\nc{\hM}{{\hat{M}}} \nc{\hN}{{\hat{N}}} \nc{\hO}{{\hat{O}}}
\nc{\hP}{{\hat{P}}} \nc{\hR}{{\hat{R}}} \nc{\hS}{{\hat{S}}}
\nc{\hT}{{\hat{T}}} \nc{\hU}{{\hat{U}}} \nc{\hV}{{\hat{V}}}
\nc{\hW}{{\hat{W}}} \nc{\hX}{{\hat{X}}} \nc{\hZ}{{\hat{Z}}}

\nc{\hn}{{\hat{n}}}

\newcommand{\ket}[1]{|#1\rangle}

\begin{document}

\Large

\title{The construction of multiqubit unextendible product bases}

\date{\today}

\pacs{03.65.Ud, 03.67.Mn} 

\author{Caohan Cheng}\email[]{chengch@buaa.edu.cn}
\affiliation{LMIB(Beihang University), Ministry of education, and School of Mathematical Sciences, Beihang University, Beijing 100191, China}
\author{Lin Chen}\email[]{linchen@buaa.edu.cn (corresponding author)}
\affiliation{LMIB(Beihang University), Ministry of education, and School of Mathematical Sciences, Beihang University, Beijing 100191, China}
\affiliation{International Research Institute for Multidisciplinary Science, Beihang University, Beijing 100191, China}

\begin{abstract}
The unextendible orthogonal matrices (UPBs) can be used for various problems in quantum information. We provide an algorithm to check if two UPBs are non-equivalent to each other. We give a method to construct UPBs and we apply this method to find all $5$-qubit UPBs of size eight. We apply the algorithm to check if the $5$-qubit UPBs of size eight are non-equivalent to each other. Based on all the $5$-qubit UPBs of size eight, we propose a theorem for constructing a new UPB non-equivalent to a given one. 
\end{abstract}

\maketitle

\section{introduction}
\label{sec:intro}

Unextendible product basis (UPB) is a set of orthonormal product vectors such that no product vector is orthogonal to all of them \cite{bennett1999unextendible,bennett1999quantum,chen2015minimum}. The UPBs were firstly  used to explain the phenomenon of non-locality without entanglement. Vectors of a UPB cannot be discriminated by local operations and classical communication 
(LOCC) despite the absence of entanglement \cite{bennett1999unextendible}. 
UPBs are powerful tools in quantum information, which have subsequently been used for the construction of positive-partial-transpose (PPT) entangled states \cite{bennett1999quantum,chen2011distillability}. Besides UPBs can be used for the study of Bell inequalities and completely entangled subspaces \cite{demianowicz2018unextendible,bhat2006completely}.  The relevant Bell inequalities that arise from the construction all come from UPBs \cite{PhysRevA.85.042113}. This helps in finding a maximally entangled subspace of maximal dimension \cite{bhat2006completely}. $k-CES$ is completely entangled subspaces of entanglement depth k, one can use UPBs to construct it \cite{wang20194,shi2023graph,PhysRevA.85.042113,sorensen2001entanglement}. A UPB which is unextendible with (k-1)-producible vectors is called a (k-1)-UPB. The complement of a (k-1)-unextendible product basis is a $k-CES$.
The multiqubit states such as Greenberger–Horn–Zeilinger (GHZ) and W states \cite{haffner2005scalable} are easier to realize in experiment. One can use GenTiles2 UPB to construct indecomposable positive maps \cite{terhal2001family,divincenzo2003unextendible}.
However we do not know much about the mathematical structure of the mutiqubit UPBs. 
The 3-qubit UPB in  $ (C^2)^{\otimes 3}$ is named the shift UPB \cite{bravyi2004unextendible}.
One can use a result on 1-factorization of complete graphs to find some cases of the minimum sizes of UPBs \cite{feng2006unextendible}.
The formal matrix approach is first used to study the orthogonal product bases 
(OPB) in multiqubit systems \cite{chen2018nonexistence,chen2017orthogonal}. The formal approach can be used to the study of mutiqubit UPBs, it helps to prove that there do not exist a $n$-qubit UPB of size $2^n - 5$ \cite{chen2017orthogonal}. The formal approach can provide a lot of theorems and ideas to construct a new UPB \cite{chen2018multiqubit}.
\par
  
In section \ref{sec:pre}, we describe the concept of UPBs and UOMs. Then we offer an overview of the current theorems and conclusions of the existence of UPBs. In section \ref{sec:85 upb}, we introduce an algorithm to check the equivalence of two UOMs of the same size. We compare two UOMs to determine if they are equivalent by performing row/column permutations and renaming. We realize it by putting the rows and columns of these two UOMs in the same order and compare them. We offer a table FIG \ref{fig:flowchart2} to describe the specific details of the algorithm. In section \ref{sec:algotithm}, we give all the 5-qubit UOMs of size eight. Then we introduce the method we apply to find them in FIG \ref{fig:flowchart}. Linear integer Diophantine problems now have some efficient algorithms \cite{frumkin1977polynomial,aardal2000solving,ruzsa1993solving}. We transform the process of finding small-scale UPBs into the form of solving small-scale linear Diophantine equations. Then we can construct UOMs by solving the equations. 
\par
The rest of this paper is organized as follows. In section \ref{sec:pre}, we provide the lemmas and theorems we apply. We offer some examples to describe them. In section \ref{sec:algotithm}, we provide Theorem \ref{the:2123} based on all the $5$-qubit UPBs. This theorem helps in obtaining a new UOM non-equivalent to a given one. Finally, we conclude in \ref{sec:con}. 

\section{preliminaries}

\label{sec:pre}

In this section, we introduce the preliminary results. In subsection \ref{sub:defin of upb and uom}, we describe the concept of UPBs and UOMs. Then we make a overview about the current main conclusions on constructing UPBs. In subsection \ref{sub:lemma}, we introduce the lemmas and theorems we use in the rest of this paper.
\subsection{Known mutiqubit UPBs}
\label{sub:defin of upb and uom}
In quantum information, an n-partite unextendible product basis (UPB) in $C^{d_1} \otimes  ... \otimes C^{d_n}$ is a set of n-partite orthonormal product vectors such that no  n-partite product vector is orthogonal to all of them \cite{wang2020construction}. We require that the cardinality of the set is smaller than ${d_1} {d_2} ... {d_n}$.
For convenience, we define the concept of unextendible orthogonal matrix (UOM). We express the product vectors of  an $n$-qubit UPB with size $m$ as the row vectors in an $m\times n$ matrix, respectively. We refer to the matrix as a UOM, and any two row vectors of a UOM are orthogonal to each other. For the orthogonal qubit vectors $\ket{a_{11}}$ and $\ket{a_{11}'}$, we define them as the vector variables ${a_{11}}$ and ${a_{11}'}$ in a UOM.
The 4-qubit UPB\\ $\{\ket{a_{11},a_{12},a_{13},a_{14}}, \ket{a_{11},a_{12},a_{13},a_{14}'}, \ket{a_{11},a_{12},a_{13},a_{14}'}, \ket{a_{11},a_{12},a_{13}',a_{34}',}, \ket{a_{11}',a_{12},a_{53},a_{14}},\\\ket{a_{61},a_{12}',a_{13}',a_{64}}, \ket{a_{11},a_{12}',a_{13}',a_{64}}, \ket{a_{61}',a_{12}',a_{53},a_{14}},
\ket{a_{11}',a_{92},a_{13},a_{14}'}, \ket{a_{11}',a_{92}',a_{53}',a_{64}'}\}$ can be expressed as
\begin{equation}\label{ori_matrix}
	\begin{pmatrix}
		a_{11} & a_{12} & a_{13} & a_{14}\\
		a_{11} & a_{12} & a_{13} & a_{14}'\\
		a_{11} & a_{12} & a_{13}' & a_{34}\\
		a_{11} & a_{12} & a_{13}' & a_{34}'\\
		a_{11}' & a_{12} & a_{53} & a_{14}\\
		a_{61} & a_{12}' & a_{13}' & a_{64}\\
		a_{11} & a_{12}' & a_{53}' & a_{64}'\\
		a_{61}' & a_{12}' & a_{53} & a_{14}\\
		a_{11}' & a_{92} & a_{13} & a_{14}'\\
		a_{11}'& a_{92}' & a_{53}' & a_{64}'
	\end{pmatrix}.
\end{equation}\par
In the following, we review the known multiqubit UPBs so far \cite{tura2012four}.  Researchers have exhausted all n-qubits UPBs up to $n = 8$, see  Table \ref{ta:UOMS findings_1} - \ref{ta:UOMS findings_3}. As there is no two-qubit UPB, the three-qubit UPB was the first multiqubit UPB studied by the Shifts UPBs \cite{bennett1999unextendible}. All four-qubit UPBs, the five-qubit UPBs (except the sizes $11$,$27$), the six-qubit UPBs (except the sizes $10$,$11$,$13$,$59$), the seven-qubit UPBs (except the sizes 10, 11, 13, 14, 15, 19, 123) were constructed by \cite{johnston2014structure}. Further, all eight-qubit UPBs were constructed by \cite{chen2018multiqubit}. 
In \cite{johnston2014structure}, Johnston proposed a theorem solving the existence of almost all mutiqubit UPBs whose size $s$ is a multiple of $4$, except the case handling n-qubit UPBs and $n \equiv1 \pmod{4}$ and its size $s = 2n$.  Ref \cite{johnston2014structure,chen2018nonexistence} proposed that for any $n \geq 4$, the n-qubits UPBs do not exist of sizes $2^n - 1$, $2^n - 2$, $2^n - 3$ and $2^n - 5$. Then the  remaining n-qubit UPBs of $n \in \left[4,7\right]$ were discovered and constructed by \cite{chen2018multiqubit,wang2020construction}.
The minimum size of n-qubit UPBs, denoted as $f\left(n \right)$, is shown in \cite{johnston2013minimum,wang2020construction} with
\begin{equation}
	f(n) = \begin{cases}
		& n+1,\text{ if  n is odd; } \\
		& n+2,\text{ if  n = 4 or } n\equiv 2\left ( mod\ 4 \right ) \\
		& n+3,\text{ if  n is 8; } \\
		& n+4,\text{ otherwise. }
	\end{cases}
\end{equation}
\par
Ref \cite{johnston2014structure} proposed that for any odd $n$, the n-qubit UPB of size $n + 2$ does not exist. For example, the 9-qubit UPB of size $11$ and the 11-qubit of size $13$ do not exist.
\begin{table}[H]
	\begin{center}
		\caption{Statistics of UOMs for $n = 3 - 8$. The letter $e$ means the UOMs exist in $\mathcal{O}$(m,n). The number 0 means the UOMs do not exist in $\mathcal{O}$(m,n). }
		\label{ta:UOMS findings_1}
		\begin{tabular}{l|cccccccccccc}
			\hline
			$m$/$n$ & $3$ &$4$ &$5$ &$6$ &$7$ &$8$\\
			\hline
			3 & 0 & 0 & 0 & 0 & 0 & 0\\
			4 & e & 0 & 0 & 0 & 0 & 0\\
			5 & 0 & 0 & 0 & 0 & 0 & 0\\
			6 & 0 & e & e & 0 & 0 & 0\\
			7 & 0 & e & 0 & 0 & 0 & 0\\
			8 & e & e & e & e & e & 0\\
			9 & 0 & e & 0 & 0 & 0 & 0\\
			10 & 0 & e & e & e & e & 0\\
			11 & 0 & 0 & e & e & e & e\\
			12 & 0 & e & e & e & e & e\\
			13-15 & 0 & 0 & e & e & e & e\\
			16 & 0 & e & e & e & e & e\\
			17-26 & 0 & 0 & e & e & e & e\\
			27 & 0 & 0 & 0 & e & e & e\\
			28 & 0 & 0 & e & e & e & e\\
			29-31 & 0 & 0 & 0 & e & e & e\\
			32 & 0 & 0 & e & e & e & e\\
			33-58 & 0 & 0 & 0 & e & e & e\\
			59 & 0 & 0 & 0 & 0 & e & e\\
			60 & 0 & 0 & 0 & e & e & e\\
			61-63 & 0 & 0 & 0 & 0 & e & e\\
			64 & 0 & 0 & 0 & e & e & e\\
			65-122 & 0 & 0 & 0 & 0 & e & e\\
			123 & 0 & 0 & 0 & 0 & 0 & e\\
			124 & 0 & 0 & 0 & 0 & e & e\\
			125-127 & 0 & 0 & 0 & 0 & 0 & e\\
			128 & 0 & 0 & 0 & 0 & e & e\\
			129-250 & 0 & 0 & 0 & 0 & 0 & e\\
			251 & 0 & 0 & 0 & 0 & 0 & 0\\
			252 & 0 & 0 & 0 & 0 & 0 & e\\
			253-255 & 0 & 0 & 0 & 0 & 0 & 0\\
			256 & 0 & 0 & 0 & 0 & 0 & e\\
			
			\hline			
		\end{tabular}
	\end{center}
\end{table}
\begin{table}[h!]
	\begin{center}
		\caption{Statistics of UOMs for $n = 9 - 13$. Known UOMs in $\mathcal{O}$(m,n). The number with "'" means that it has been confirmed that it does not exist.}
		\label{ta:UOMS findings_2}
		\begin{tabular}{l|lllll}
			\hline
			$n$ & \hspace{16em}$m$\\
			\hline
			9 & \hspace{5em}     10, 11', 12, 16, 22-506, 507', 508, 509', 510', 511'\hspace{1em}512\\
			10 & \hspace{5em}       12, 16, 20, 22, 24, 26, 28, 32-1018, 1019', 1020, 1021', 1022', 1023',\hspace{1em}1024\\
			11 & \hspace{5em}       12, 13', 16, 20, 24, 28, 32, 34, 36, 38, 40, 42, 44-2042, 2043', 2044, 2045', 2046', 2047', 2048\\
			12 & \hspace{5em}       16, 20, 24, 28, 32, 36, 40, 44, 46, 48, 50, 52, 54, 56-4090, 4091', 4092, 4093', 4094', 4095', 4096\\
			13 & \hspace{5em}       14, 15', 16, 20, 24, 32, 36, 40, 44, 48, 52, 56, 60, 62, 64, 66, 68, 70, 72-8186, 8187', 8188, 8189', 8190', 8191', 8192\\
			\hline			
		\end{tabular}
	\end{center}
\end{table}
\begin{table}[h!]
	\begin{center}
		\caption{Statistics of UOMs current findings(iii). Unknown UOMs in $\mathcal{O}$(m,n).}
		\label{ta:UOMS findings_3}
		\begin{tabular}{l|lllll}
			\hline
			$n$ & \hspace{16em}$m$\\
			\hline
			9 & \hspace{5em}     13-15,\hspace{2em}17-21\\
			10 & \hspace{5em}       13-15, 17-19, 21, 23, 25, 27, 29-31\\
			11 & \hspace{5em}       14-15, 17-19, 21-23, 25-27, 29-31, 33, 35, 37, 39, 41, 43\\
			12 & \hspace{5em}       17-19, 21-23,25-27,29-31,32-35,37-39,41-43,45,47,49,51,53,55\\
			13 & \hspace{5em}       17-19,21-23,25-31,33-35,37-39,41-43,45-47,49-51,53-55,57-59,61,63,65,67,69,71\\
			\hline			
		\end{tabular}
	\end{center}
\end{table}
\subsection{Notations and properties of UOMs}
\label{sub:lemma}
For ease of programming, we define the orthogonality of UPB elements.
We define that two elements in a column are orthogonal if and only if they are an odd number $i$ and an even number $j$, and $j = i + 1$ \cite{wang2020construction}.
We rename UPBs according to the UPBs element orthogonality. We firstly rename the top element as $1$, the same element in this column as $1$, and the element orthogonal to this element as $2$. After that, we name the top unnamed element as $3$, the same element in this column as $3$, and the element orthogonal to this element as $4$, and so on. Each time we name the top unnamed element with the smallest unused positive integer in this column.\par
	The UOM \eqref{ori_matrix} is expressed as a UOM $A$, the notation $a_{11}'$ means it is orthogonal to $a_{11}$. The UPB in \eqref{ori_matrix} can be renamed according to the orthogonality property. Taking the first column for example, since $a_{11}$ appears at the top, we name it as $1$ and the orthogonal $a_{11}'$ is named as $2$. Since $a_{61}$ is the topmost except $a_{11}$ and $a_{11}'$, we name it as $3$, and the orthogonal $a_{61}'$ is named as  $4$. Finally we can express \eqref{ori_matrix} as follows.\\
\begin{equation}\label{matrix A}
	A:=\begin{pmatrix}
	1& 1& 1& 1\\
	1& 1& 1& 2\\
	1& 1& 2& 5\\
	1& 1& 2& 6\\
	2& 1& 3& 1\\
	3& 2& 2& 3\\
	1& 2& 4& 4\\
	4& 2& 3& 1\\
	2& 3& 1& 2\\
	2& 4& 4& 4
	\end{pmatrix}.\\	
\end{equation}
\par
Next we present a property of UOMs we apply in the algorithm in section \ref{sec:85 upb}. 
\begin{theorem}\label{le:21}
	The elements in the same column of a UOM are still in the same column after row/column permutation and naming. The elements in the same row of a UOM are still in the same row after row/column permutation and naming.\par
\end{theorem}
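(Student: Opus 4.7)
The plan is to decompose the allowed operations into three atomic ones — row permutation, column permutation, and renaming — and verify the claim separately for each, after which the general case follows by composition. The theorem is really an invariance statement about two equivalence relations on matrix entries (``shares a row with'' and ``shares a column with''), so it is enough to check that each atomic operation preserves both relations.

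First I would handle a row permutation: applying $\sigma\in S_m$ sends the entry in position $(i,j)$ to position $(\sigma(i),j)$. The second coordinate is untouched, so two entries with equal column index keep equal column index; and two entries sharing a row index $i$ both move to row $\sigma(i)$, so they still share a row. A column permutation is proved by the mirror argument, swapping the roles of the two coordinates. In particular, each of these operations moves whole rows (respectively whole columns) as rigid units, which is exactly the content of the claim for the permutation part.

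Second I would treat the renaming procedure described just before the theorem. That procedure walks down each column independently, assigning fresh positive integers to the symbolic entries according to the orthogonality rule; crucially, it never alters the cell in which an entry sits — it only relabels the symbol occupying that cell. Hence the $(i,j)$ coordinate of every entry is fixed under renaming, so both the ``same row'' and ``same column'' relations are trivially preserved. The only subtlety worth mentioning is that after renaming the \emph{label} of an element may change, so the statement must be read as: entries occupying cells that lay in a common column (resp.\ row) before the operation still occupy cells in a common column (resp.\ row) after it.

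Finally, since row permutations, column permutations, and renamings each preserve the two relations, any composition of them does as well, which is exactly the general statement of Theorem~\ref{le:21}. I do not expect a substantive obstacle: the proof is a short bookkeeping verification, and the only care required is to be explicit that ``the same element'' refers to the cell position tracked through the sequence of operations rather than to the symbolic label, which is precisely why the theorem is useful for the equivalence-checking algorithm of Section~\ref{sec:85 upb}.
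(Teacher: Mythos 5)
Your proof is correct, and since the paper's own ``proof'' is only the one-line remark that the theorem follows from the definition of UOMs, your decomposition into the three atomic operations (row permutation, column permutation, renaming) simply supplies the bookkeeping the paper leaves implicit. The one point worth keeping explicit in any write-up is exactly the subtlety you flag: ``the same element'' must mean the cell (equivalently, the underlying product vector) tracked through the operations, not its integer label, since renaming changes labels.
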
\par
The theorem can be proofed by the definition of UOMs. 
	We define certain quantities and methods to handle matrices we obtain after renaming uniformly.
    We define row-wise sorting. It is a way to sort matrix rows.
	First, we place the numbers in the first column of A as follows. We place them in the ascending  order from top to bottom. Now the first column of A consists of $d_1$ ones, $d_2$ twos, ..., $d_{2k-1}$ $(2k-1)$s, and $d_{2k}$ $(2k)s$.
	To permute the second column of A, we consider  the left-upper $d_1\times2$ submatrix of A. We place the numbers in the second column of the submatrix in the ascending order from top to bottom.  Now the $d_1\times2$ submatrix consists of submatices whose row numbers are $e_1$, $e_2$, ..., $e_i$, such that $e_1$ + $e_2$+ ... + $e_i$ = $d_1$. One can similarly treat other submatrices in A, such as $d_2\times2$ submatrix, ..., $d_{2k}\times2$ submatrix. To permute the third column of A, we handle the second and the third column of A in the way we handled  the first and second columns of A. 
	After permuting all columns of A , we obtain A after row-wise.
	sorting.
	We define the multiplicity of an element in a column. It is the number of times it appears in that column.
	We sort rows of $A$  in \eqref{matrix A}. We obtain a new matrix $A-sorted$
\begin{equation}\label{A-sort}
	\begin{pmatrix}
		1& 1& 1& 1\\
		1& 1& 1& 2\\
		1& 1& 2& 5\\
		1& 1& 2& 6\\
		1& 2& 4& 4\\
		2& 1& 3& 1\\
		2& 3& 1& 2\\
		2& 4& 4& 4\\
		3& 2& 2& 3\\
		4& 2& 3& 1
	\end{pmatrix}.
\end{equation}\\
	
	We calculate the  multiplicity of elements in the first column of $A$ in \eqref{matrix A}. One can see that the multiplicity of $1,2,3,4$  in the first column of $A$ in \eqref{matrix A} are $5,3,1,1$, respectively.\par
	 We use the notation $\omega \left ( A_j \right )$ to express the column feature of the jth column of A. $A_j$ is the j-th column of the matrix $A$. The matrix $\omega \left ( A_j \right )$ has two columns, and the row number is the count of orthogonal element pairs in the column.The two elements in each row represents the multiplicities of each orthogonal element of a orthogonal pair in the column. To fix $\omega \left ( A_j \right )$, we give a restriction on $\omega \left ( A_j \right )$.
	  Each row of $\omega \left ( A_j \right )$ is in the ascending order from left to right. The matrix $\omega \left ( A_j \right )$ does not change if we sort it. We store the standard column features of all columns of the matrix in a list and refer to the list as "column feature list".
	 We calculate the column feature and the standard column feature of the first column of $A$ in \eqref{matrix A}. We will calculate the column feature list of $A$ in \eqref{matrix A}.
	When calculating, we agree to start from the smaller number. The first column of A has five ones, three twos are orthogonal to ones, one 3 that is not orthogonal to 1 or 2, and one 4 that is orthogonal to 3.
	We can obtain the column feature of the first column of $A$ in \eqref{matrix A} is $
	\begin{pmatrix}
		5& 3\\
		1& 1\\
	\end{pmatrix}$.
	We can respectively obtain the standard column feature of the first column of $A$ in \eqref{matrix A} is 
$
	\begin{pmatrix}
		1& 1\\
		3& 5\\
	\end{pmatrix}$ and the column feature list of $A$ in \eqref{matrix A} is 

\begin{equation}
	\left[\left(\begin{array}{ll}
		1 & 1 \\
		3 & 5
	\end{array}\right),\left(\begin{array}{ll}
		1 & 1 \\
		3 & 5
	\end{array}\right),\left(\begin{array}{ll}
		2 & 2 \\
		3 & 3
	\end{array}\right),\left(\begin{array}{ll}
		1 & 1 \\
		1 & 2 \\
		2 & 3
	\end{array}\right)\right].\\
\end{equation}

We define the multiplicity of an element in a column. It is the number of times it appears in that column.
One can see that the multiplicity of $1,2,3,4$  in the first column of $A$ are $5,3,1,1$, respectively.
So we can define a set $\left \{ a_{1}^{\left ( i \right ) }, b_{1}^{\left ( i \right ) },...,a_{n_{i}}^{\left ( i \right ) },b_{n_{i}}^{\left ( i \right ) }  \right \} _{A}$ correspond to a column of a given matrix $A$ \cite{johnston2014structure}. The number $i$ means it is the ith column of $A$. The number $n_{i}$ is the total number of orthogonal pairs in the i-th column of $A$. So we obtain two theorem used in section \ref{sec:algotithm}
\begin{theorem}\label{eq:5}
\begin{equation}\label{eq:4}
	{\textstyle \sum_{i=1}^{n_{i}}}(a_{n_{i}}^{\left (i \right ) } + b_{n_{i}}^{\left ( i \right ) }) = m,
\end{equation}
	\begin{equation}
		{\textstyle \sum_{i,j}}a_{j}^{\left(i\right)}b_{j}^{\left(i\right)}\ge s(s-1)/2.  
	\end{equation}
\end{theorem}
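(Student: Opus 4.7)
The plan is to prove both identities by double-counting arguments on the combinatorial structure of a UOM. Recall that the notation $\{a_1^{(i)},b_1^{(i)},\ldots,a_{n_i}^{(i)},b_{n_i}^{(i)}\}_A$ records, for the $i$-th column of $A$, the multiplicities of each orthogonal pair of values that actually occurs in that column; the $j$-th orthogonal pair contributes $a_j^{(i)}$ copies of one element and $b_j^{(i)}$ copies of its orthogonal partner.

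For the first identity $\sum_{j=1}^{n_i}(a_j^{(i)}+b_j^{(i)}) = m$, I would fix a column index $i$ and count entries two ways. Since $A$ has $m$ rows, column $i$ consists of $m$ entries (with multiplicity). By the renaming convention, every entry appearing in column $i$ belongs to exactly one of the recorded orthogonal pairs, so partitioning these $m$ entries according to which pair they belong to yields $\sum_{j=1}^{n_i}(a_j^{(i)}+b_j^{(i)})$ on one side and $m$ on the other. By Theorem \ref{le:21} the column structure is preserved under the normalization used to define the multiplicities, so no entries are lost or doubly counted.

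For the second inequality, I would count ordered pairs of distinct rows that are ``witnessed as orthogonal'' by a specific column. By definition of a UOM, every two distinct rows must correspond to orthogonal product vectors, and two product vectors on qubits are orthogonal if and only if at least one coordinate pair consists of mutually orthogonal one-qubit vectors, i.e., if and only if there exists some column $i$ and some pair index $j$ in that column such that one row has the first element of pair $j$ in column $i$ and the other row has its orthogonal partner. For fixed $(i,j)$, the number of unordered row pairs giving such a witness is exactly $a_j^{(i)} b_j^{(i)}$, since we freely choose one row from the $a_j^{(i)}$ occurrences and one from the $b_j^{(i)}$ occurrences. Summing $\sum_{i,j} a_j^{(i)} b_j^{(i)}$ therefore counts, with multiplicity, all orthogonality witnesses across the $\binom{s}{2}=s(s-1)/2$ unordered row pairs, and since every pair contributes at least one such witness the inequality follows.

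No step is really the main obstacle; the statements reduce to careful bookkeeping once the meanings of $a_j^{(i)}$, $b_j^{(i)}$, and ``orthogonality of product vectors'' are unpacked. The only subtle point worth flagging is that two rows may be orthogonal via more than one column, which is precisely why the second relation must be an inequality rather than an equality; equality holds exactly when every pair of rows has a unique orthogonality witness.
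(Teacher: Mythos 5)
Your double-counting argument is correct, and it is essentially the intended one: the paper itself states this theorem without proof (attributing it to the Johnston reference), and the standard justification there is exactly your two counts --- column $i$ has $m$ entries partitioned among its orthogonal pairs, and each of the $\binom{s}{2}$ row pairs (with $s=m$) needs at least one column witnessing orthogonality, each witness in column $i$ at pair $j$ being counted by $a_j^{(i)}b_j^{(i)}$. Your closing remark correctly identifies why the second relation is only an inequality; the only point worth adding is that the first identity silently requires every element of a column to be assigned to some recorded pair even when its orthogonal partner is absent (multiplicity $0$), which is consistent with the renaming convention and costs nothing in the second sum since such pairs contribute a product of zero.
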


We choose a random element $w_{1}$ in the set $\left \{ a_{1}^{\left ( 1 \right ) }, b_{1}^{\left ( 1 \right ) },...,a_{n_{i}}^{\left ( 1 \right ) },b_{n_{i}}^{\left ( 1 \right ) }  \right \} _{A}$, then we get a new matrix $A'$ by deleting the rows and columns related to $w_{1}$ in $A$. Then we choose a random element $w_{2}$ in the set $\left \{ a_{1}^{\left ( 1 \right ) }, b_{1}^{\left ( 1 \right ) },...,a_{n_{i}}^{\left ( 1 \right ) },b_{n_{i}}^{\left ( 1 \right ) }  \right \} _{A'}$, then we get a new matrix $A"$ by deleting the rows and columns related to $w_{2}$ in $A'$ ,... Finally, we get a sequence $\left\{ w_{i} \right\}$. A given $m\times n$ orthogonal matrix is an UOM if and only if any sequence $\left\{ w_{i} \right\}$ built from it satisfies\par
\begin{theorem}\label{the:UOM}
	\begin{equation}\label{eq:the}
		max\left \{  {\sum_{1}^{i}w_{i}}\right \} \le m-n+i-1.
	\end{equation}
\end{theorem}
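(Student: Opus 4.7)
The plan is to translate the UOM property into a combinatorial covering condition on the deletion procedure and then prove both directions through a key reduction lemma. A product vector $\ket{v_1}\otimes\cdots\otimes\ket{v_n}$ is orthogonal to a row of the orthogonal matrix iff at least one $\ket{v_j}$ is orthogonal to that row's $j$-th entry. Since a qubit vector is orthogonal to a unique direction in $\bbC^2$, choosing $\ket{v_j}$ amounts to selecting one value in column $j$; this choice ``kills'' exactly the rows whose $j$-th entry equals that value. Hence the matrix is a UOM iff no choice of one value per column covers all rows. In the deletion procedure, $w_i$ records the number of rows killed by column $i$'s choice that have survived the first $i-1$ deletions, so $\sum_{j=1}^i w_j$ is the number of rows covered after $i$ steps of the induced assignment.

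For the sufficiency direction (the inequality implies UOM) I would argue by contraposition. If the matrix is extendible, some product vector $\ket{v_1}\otimes\cdots\otimes\ket{v_n}$ covers every row. Reading off its one-value-per-column selections in order produces a sequence with $\sum_{j=1}^n w_j = m$. This contradicts the $i=n$ instance of the bound, which states $\sum_{j=1}^n w_j\le m-1$. Hence if every sequence obeys the bound, the matrix must be a UOM.

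For the necessity direction (UOM implies the inequality) I would again argue by contraposition. Suppose some sequence violates the bound at step $i$, i.e.\ $\sum_{j=1}^i w_j \ge m-n+i$; equivalently the remaining submatrix has $r:=m-\sum_{j=1}^i w_j$ rows and $c:=n-i$ columns with $r\le c$. I would extend the partial sequence $(w_1,\ldots,w_i)$ to a full covering sequence, thereby exhibiting a product vector orthogonal to every row and contradicting UOM-ness. The reduction rests on the following lemma: any intermediate state with $r\le c$ admits a continuation that drives $r$ to $0$. The proof is by induction on $c$. The base case $c=0$ forces $r=0$ and is vacuous. For the step: if $r=0$ we are done; otherwise pick any value in the leading remaining column with multiplicity $\mu\ge 1$ (such a value exists because the column has $r\ge 1$ entries), delete its $\mu$ rows and that column, and observe that $r'=r-\mu\le c-1$ so the induction hypothesis applies.

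The main obstacle is verifying that the lemma's setup persists through the deletion process: one must check that after each stage every remaining column still decomposes into orthogonal pairs with well-defined multiplicities summing to the current row count, so that a choice with $\mu\ge 1$ is genuinely available in the sense of the sets $\{a_j^{(i)},b_j^{(i)}\}$. This is a bookkeeping argument, since a column of an $r\times c$ submatrix with $r\ge 1$ has $r$ entries partitioned (possibly degenerately) into orthogonal pairs, and Theorem~\ref{le:21} guarantees that deleting rows and columns preserves the incidence data on which the definitions depend. Combining the lemma with the two contrapositive arguments yields the ``iff''.
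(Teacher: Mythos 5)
The paper states Theorem \ref{the:UOM} without any proof (it is essentially imported, via the multiset notation credited to \cite{johnston2014structure}, from Johnston's structural analysis of multiqubit UPBs), so there is no in-paper argument to compare against; your proposal supplies the missing proof and its logic is sound. The translation of unextendibility into a one-value-per-column covering condition, the contrapositive at $i=n$ for sufficiency, and above all the greedy lemma for necessity --- once the surviving rows number at most the surviving columns, every remaining column can kill at least one surviving row, so a violating partial sequence extends to a full cover --- are exactly the right ingredients, and your induction on $c$ is carried out correctly.

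The one place you should tighten is the dictionary between extending product vectors and legal sequences of the deletion procedure. A component $\ket{v_j}$ of an extending product vector may be orthogonal to a state that does not occur in column $j$ at all, or that occurs only in rows already deleted by step $j$; in those cases the ``value selected in column $j$'' is not an element of the current column's multiset $\left\{ a_{l}^{(j)}, b_{l}^{(j)} \right\}$, and it is not clear from the paper's conventions that $w_j=0$ is an admissible move. The fix is routine: replace any such step by an arbitrary admissible choice, which kills at least as many rows and therefore preserves both the covering property and the violation of the bound, and observe that if all rows die before step $n$ the inequality is already violated at that earlier index, since $m > m-(n-i+1)$ for $i<n$. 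As written, though, the claim that reading off the selections ``produces a sequence with $\sum_{j=1}^{n} w_j = m$'' is asserted rather than checked. With that patch your two contrapositive arguments give the stated equivalence.
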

One can apply Theorem \ref{the:UOM} to check if a given matrix $M$ is a UOM. If there is one sequence $\left\{ w_{i} \right\}$ built from $M$ and the sequence do not satisfy the inequality \eqref{eq:the}, then $M$ is not a UOM.          

\section{The algorithm used for verifying UPB equivalence}
\label{sec:85 upb}
	In this section, we design an algorithm to check the equivalence of two UOMs of the same size. We use this method to check all the non-equivalent $8\times5$ UPBs in section \ref{sec:algotithm}. If two UOMs are not equivalent, the program will output "not equivalent". If two UOMs are equivalent, the program will output "equivalent" and provide the same matrix obtained by row and column permutation and renaming for these two matrices.\par
	Based on Theorem \ref{le:21}, we locate rows in a UOM with two given properties of rows. The first property is that each of the rows has the first element of minimum multiplicity in the first column. We store these rows in a list. This elements are still in this row after permutation and renaming. So we can locate the position of these rows before and after permutation and naming.The second property is that each of the rows has the most numbers of elements such that each element is larger than or equal to the given "$i$". We store these rows in a list. We construct an example to demonstrate how we locate rows by these two properties.
	 We can obtain rows in \eqref{matrix A} with the first given property.
	 We can see that in the first column of $A$, the multiplicity of \ $3$ and $4$ is the minimum, which corresponds to the two rows in $A$ in \eqref{matrix A} is
	$
	\left[\begin{pmatrix}
		3 & 2 & 2 & 3
	\end{pmatrix},\begin{pmatrix}
		4 & 2 & 3 & 1
	\end{pmatrix} \right ]
	$.
	After the permutation and renaming of $A$ in \eqref{matrix A}, the minimum element in this column of $A$ in \eqref{matrix A} still corresponds to these two rows.
	We obtain rows in \eqref{matrix A} with the second given property .
	Here, we take the given number $i = 3$.
	We can see that the tenth row of $A$ have the most elements larger than or equal to $3$ in $A$ in \eqref{matrix A} is
	$
	\left [ \begin{pmatrix}
		2 & 4 & 4 & 4
	\end{pmatrix} \right ] 
	$.
	 Next, we will provide the details of the steps of the algorithm in FIG \ref{fig:flowchart2}.
	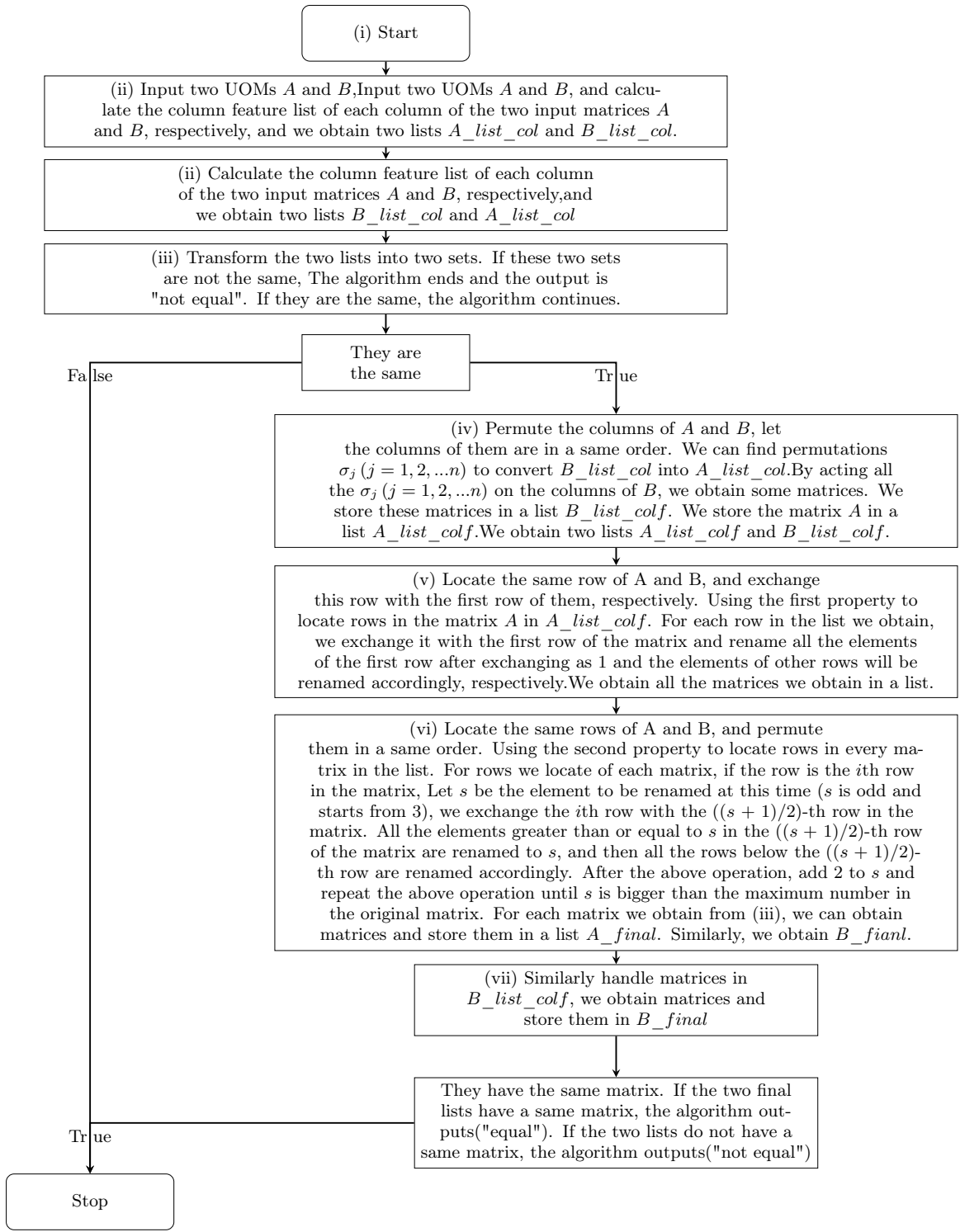
\begin{figure}[H]\label{fg:2}
		\centering
		\begin{minipage}[t]{0.8\textwidth}
			       \scalebox{0.9}{%
			\begin{tikzpicture}[node distance=2cm, auto, on grid]
				\tikzset{
					startstop/.style={
						rectangle,
						rounded corners,
						minimum width=3cm,
						minimum height=1cm,
						align=center,
						draw=black,
						fill=white!30
					},
					io/.style={
						trapezium,
						trapezium left angle=90,
						trapezium right angle=90,
						minimum width=3cm,
						minimum height=1cm,
						align=center,
						draw=black,
						fill=white!30
					},
					process/.style={
						rectangle,
						minimum width=3cm,
						minimum height=1cm,
						align=center,
						draw=black,
						fill=white!30
					},
					decision/.style={
						diamond,
						minimum width=3cm,
						minimum height=1cm,
						align=center,
						draw=black,
						fill=white!30
					},
					arrow/.style={
						thick,
						->,
						>=stealth
					}
				}
				
				\node (start) [startstop] {(i) Start};
				\node (input) [io, below=1.4cm of start,text width = 12cm] {(ii) Input two UOMs $A$ and $B$,Input two UOMs $A$ and $B$, and calculate the column feature list of each column of the two input matrices $A$ and $B$, respectively, and we obtain two lists $A$\_$list$\_$col$ and $B$\_$list$\_$col$.};
				\node (process1) [process, below=1.5cm of input, text width = 12cm] {(ii) Calculate the column feature list of each column \\ of the two input matrices $A$ and $B$, respectively,and \\ we obtain two lists $B$\_$list$\_$col$ and  $A$\_$list$\_$col$};
				\node (process2) [process, below=1.5cm of process1, text width = 12cm] {(iii) Transform the two lists into two sets. If these two sets are not the same, The algorithm ends and the output is "not equal". If they are the same, the algorithm continues.};
				\node (decision) [process, below=1.5cm of process2] {They are \\ the same};
				\node (process2a) [process, below right= 3cm of decision, xshift=2cm, text width = 12cm]  {(iv) Permute the columns of $A$ and $B$, let \\the columns of them are in a same order.  We can find permutations $\sigma_j \left ( j = 1,2,...n  \right )$ to convert $B$\_$list$\_$col$ into $A$\_$list$\_$col$.By acting all the $\sigma_j \left ( j = 1,2,...n \right )$ on the columns of $B$, we obtain some matrices. We store these matrices in a list $B$\_$list$\_$colf$. We store the matrix $A$ in a list $A$\_$list$\_$colf$.We obtain two lists $A$\_$list$\_$colf$ and $B$\_$list$\_$colf$.};
				\node (process2c) [process, below= 2.7cm of process2a, text width = 12cm] {(v) Locate the same row of A and B, and exchange\\ this row with the first row of them, respectively. Using the first property to locate rows in the matrix $A$ in $A$\_$list$\_$colf$. For each row in the list we obtain, we exchange it with the first row of the matrix and rename all the elements of the first row after exchanging as $1$ and the elements of other rows will be renamed accordingly, respectively.We obtain all the matrices we obtain in a list.};
				\node (process2d) [process, below= 3.6cm of process2c, text width = 12cm] {(vi) Locate the same rows of A and B, and permute\\ them in a same order. Using the second property to locate rows in every matrix in the list. For rows we locate of each matrix, if the row is the $i$th row in the matrix, Let $s$ be the element to be renamed at this time ($s$ is odd and starts from $3$), we exchange the $i$th row with the $((s+1)/2)$-th row in the matrix. All the elements greater than or equal to $s$ in the $((s+1)/2)$-th row of the matrix are renamed to $s$, and then all the rows below the $((s+1)/2)$-th row are renamed accordingly. After the above operation, add $2$ to $s$ and repeat the above operation until $s$ is bigger than the maximum number in the original matrix. For each matrix we obtain from (iii), we can obtain  matrices and store them in a list $A\_final$. Similarly, we obtain $B\_fianl$.};
				\node (process2e) [process, below= 3cm of process2d,text width = 7cm] {(vii) Similarly handle matrices in \\$B$\_$list$\_$colf$, we obtain matrices and\\ store them in $B\_final$};
				\node (discussion2) [process, below = 2.2cm of process2e,text width = 7cm] {They have the same matrix. If the two final lists have a same matrix, the algorithm outputs("equal"). If the two lists do not have a same matrix, the algorithm outputs("not equal")};
				\node (stop) [startstop, below left = 2cm of discussion2,xshift = -8cm] {Stop};
				
				\draw [arrow] (start) -- (input);
				\draw [arrow] (input) -- (process1);
				\draw [arrow] (process1) -- (process2);
				\draw [arrow] (process2) -- (decision);
				\draw [arrow] (decision) -| node[anchor=north] {Fa\ lse} (stop);
				\draw [arrow] (decision) -| node[anchor=north] {Tr\ ue} (process2a);
				\draw [arrow] (process2a) -- (process2c);
				\draw [arrow] (process2c) -- (process2d);
				\draw [arrow] (process2d) -- (process2e);
				\draw [arrow] (process2e) -- (discussion2);
				\draw [arrow] (discussion2) -| node[anchor=north] {Tr\ ue} (stop);
			\end{tikzpicture}
		}
	\end{minipage}%
	
		\captionsetup{justification=centering}
		\caption{Program Flowchart and Step Explanation.}
		\label{fig:flowchart2}
	\end{figure}

\begin{theorem}
	Two UOMs are equivalent if and only if there are two identical matrices in the two lists finally obtained by the algorithm.
\end{theorem}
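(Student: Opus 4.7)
The plan is to prove the two directions of the biconditional separately.

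For the direction $(\Leftarrow)$, I would observe that every operation performed by the algorithm on its inputs is an admissible equivalence transformation on UOMs: step (iv) is a pure column permutation; step (v) performs a row swap together with a column-wise renaming of the entries that become the first row; step (vi) iterates further row swaps coupled with column-wise renamings. Consequently every matrix stored in $A\_final$ is equivalent to $A$, and every matrix stored in $B\_final$ is equivalent to $B$, so a common element $C$ of the two lists yields $A\sim C\sim B$ by transitivity.

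For the direction $(\Rightarrow)$, assume $A\sim B$, witnessed by a column permutation $\pi$, a row permutation $\tau$, and column-wise relabelings $\rho_1,\ldots,\rho_n$ whose composition sends $B$ to $A$. I would first argue that the column feature $\omega(M_j)$ depends only on the multiset of orthogonal pairs in column $j$, so it is invariant under row permutations of $M$ and under bijective relabelings within column $j$. This gives that the multisets $A\_list\_col$ and $B\_list\_col$ coincide (so step (iii) passes) and that $\pi$ appears among the column permutations $\sigma_j$ enumerated in step (iv). After applying $\pi$, the matrix $B' = \pi(B)$ differs from $A$ only by the row permutation $\tau$ and the column-wise relabelings.

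Next I would exhibit a single execution of steps (v)--(vi) applied to $A$ and a single execution applied to $B'$ that yield the same matrix. The first property (being a row whose column-$1$ entry has minimum multiplicity in that column) is invariant under column-wise relabelings and is merely reindexed by $\tau$, so the candidate rows in $A$ correspond bijectively under $\tau$ to those in $B'$. Choosing matched candidates, swapping each into row $1$, and renaming forces the two first rows to agree and reduces the residual equivalence to one that fixes row $1$. Step (vi) then runs greedily: at stage $s$ (odd, starting at $3$), the rows attaining the second property (maximum count of entries $\ge s$) are again in bijective correspondence by the same invariance argument, and matched pivots produce matrices that agree in strictly more entries. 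Iterating until $s$ exceeds the largest label yields a matrix appearing in both $A\_final$ and $B\_final$.

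The hard part will be verifying cleanly that the renaming in step (vi) is genuinely canonical under the residual equivalence at each stage. Two points need care: (a) when several rows tie for the second property, the algorithm must enumerate \emph{all} of them, so that the pivot matched under $\tau$ really is among the explored candidates; and (b) after a pivot row is renamed, the induced renaming of partner entries (those forming orthogonal pairs with the renamed entries in the same column) must be preserved by the still-admissible column-wise relabelings. Both points reduce to the fact that any admissible relabeling sends orthogonal pairs to orthogonal pairs, so once one element of a pair is named, its partner is named identically on both sides; once this bookkeeping is in place, the rest of the argument is routine.
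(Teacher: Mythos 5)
Your proposal follows essentially the same route as the paper's own proof: the backward direction by noting every algorithmic operation is an admissible equivalence transformation and invoking transitivity through the common matrix $C$, and the forward direction by arguing that column-feature invariance plus exhaustive enumeration of the special rows guarantees at least one matched execution yielding identical outputs. Your write-up is in fact more explicit than the paper's about the invariance of the selection properties and the renaming of orthogonal partners, points the paper only asserts informally.
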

\begin{proof}
	First, we prove that if the two lists $A\_final$ and $B\_final$ we obtain from the algorithm have the same matrix $C$, the two UOMs are equivalent. 
	The two UOMs can obtain $C$ by row/column permutation and renaming, which means that they are equivalent to $C$, so they are equivalent to each other.
	Next, we prove that these two UOMs are equivalent and then there exists a matrix in the final two lists that is the same. 
	In the algorithm, we first replace the columns of the two matrices at the same position through the column feature list. When UOM is equivalent, this can be done.\par Then, we make the rows of the two matrices in the same position by finding the rows with the same elements with special properties in the two matrices. Although the rows obtained are not unique when selecting rows, since we traverse each case each time we select rows, and the two UOMs we operate are equivalent, there is at least one case that the rows of these two matrices are in the same position. In addition, when we select rows, we uniformly rename the selected rows. Therefore, there are at least one matrix in each of the last two lists. These two matrices are obtained by the two input equivalent UOMs, which are transformed by rows and columns permutation so that rows and columns are in the same position, and renamed so that the names are completely consistent. Therefore, the two matrices are exactly the same.
	
\end{proof}
The following two UOMs have been checked by the algorithm we introduce in these section. They are non-equivalent to the known same size UOMs. Suppose $x$ refers to a qubit $\ket{x}$. Next, $x$ and $x'$ are orthonormal qubits $\ket{x}$ and $\ket{x'}$. These two $7$-qubit UPBs are as follows.
\begin{example}
\label{ex:algorithm}
		\label{thm:uom14x7+15x7}
		(i) There is a $14\times 7$ UOM 
		$U_{14,7}=\bma 
		a_{11} & a_{12} & a_{13} & a_{14} & a_{15} & a_{16} & a_{17}\\
		a_{11} & a_{22} & a_{13'} & a_{24} & a_{25} & a_{26} & a_{27}\\
		a_{31} & a_{12} & a_{13'} & a_{24'} & a_{35} & a_{36} & a_{37}\\
		a_{31} & a_{22} & a_{13} & a_{44} & a_{15'} & a_{46} & a_{47}\\
		a_{11'} & a_{52} & a_{53} & a_{54} & a_{35'} & a_{46'} & a_{57}\\
		a_{11'} & a_{52'} & a_{63} & a_{64} & a_{65} & a_{36'} & a_{47'}\\
		a_{31'} & a_{52'} & a_{53} & a_{64'} & a_{75} & a_{26'} & a_{17'}\\
		a_{31'} & a_{52} & a_{63} & a_{14'} & a_{25'} & a_{86} & a_{57'}\\
		a_{91} & a_{22'} & a_{63'} & a_{54'} & a_{75'} & a_{16'} & a_{37'}\\
		a_{91'} & a_{12'} & a_{53'} & a_{44'} & a_{65'} & a_{26'} & a_{57}\\
		a_{91'} & a_{12'} & a_{63'} & a_{64} & a_{25'} & a_{46'} & a_{57'}\\
		a_{11} & a_{12'} & a_{63} & a_{44'} & a_{25} & a_{26} & a_{27'}\\
		a_{11'} & a_{52} & a_{63} & a_{44'} & a_{35'} & a_{86'} & a_{57'}\\
		a_{31'} & a_{52'} & a_{63} & a_{64'} & a_{25'} & a_{26} & a_{17'}\\
		\ema$.
		
		(ii) There is a $15\times 7$ UOM 
		$U_{15,7}=\bma 
		a_{11} & a_{12} & a_{13} & a_{14} & a_{15} & a_{16} & a_{17}\\
		a_{11} & a_{22} & a_{13'} & a_{24} & a_{25} & a_{26} & a_{27}\\
		a_{31} & a_{12} & a_{13'} & a_{24'} & a_{35} & a_{36} & a_{37}\\
		a_{31} & a_{22} & a_{13} & a_{44} & a_{15'} & a_{46} & a_{47}\\
		a_{11'} & a_{52} & a_{53} & a_{54} & a_{35'} & a_{46'} & a_{57}\\
		a_{11'} & a_{52'} & a_{63} & a_{64} & a_{65} & a_{36'} & a_{47'}\\
		a_{31'} & a_{52'} & a_{53} & a_{64'} & a_{75} & a_{26'} & a_{17'}\\
		a_{31'} & a_{52} & a_{63} & a_{14'} & a_{25'} & a_{86} & a_{57'}\\
		a_{91} & a_{22'} & a_{63'} & a_{54'} & a_{75'} & a_{16'} & a_{37'}\\
		a_{91'} & a_{12'} & a_{53'} & a_{44'} & a_{65'} & a_{26'} & a_{57}\\
		a_{91'} & a_{12'} & a_{63'} & a_{64} & a_{25'} & a_{46'} & a_{57'}\\
		a_{11} & a_{12'} & a_{63} & a_{44'} & a_{25} & a_{26} & a_{27'}\\
		a_{31'} & a_{52'} & a_{63} & a_{64'} & a_{25'} & a_{26} & a_{17'}\\
		a_{11'} & a_{52} & a_{63} & a_{24} & a_{15} & a_{86'} & a_{57'}\\
		a_{31'} & a_{52} & a_{63} & a_{14} & a_{15'} & a_{26'} & a_{57'}\\
		\ema$.
	
\end{example}
These two UOMs can be proofed that they are non-equivalent to the knowm same size UOMs . This shows the correctness of the algorithm.

\section{all $8\times 5$ uoms}
\label{sec:algotithm}
In this section, we first present all the $5$-qubit UOMs of size eight in \eqref{eq:41} - \eqref{54}. We have used the algorithm in section \ref{sec:85 upb} to check they are non-equivalent to each other. There are a total of 32 UOMs. Then we introduce the method to construct them in FIG \ref{fig:flowchart}. 
We give all the UOMs and classify them by their combinations of multiplicity. The list $\left[a_1:a_2,...,a_{2j+1} :a_{2j+2}\right]$ presents the combinations of multiplicity in one column. The matrices below the lists are the UOMs relate to them.\\\\
$\left [ 2:2,2:2 \right ] , \left [ 2:2,2:2 \right ] ,\left [ 1:1,1:1,1:1,1:1 \right ] , \left [ 1:1,1:1,1:1,1:1 \right ] ,$\\$ \left [ 1:1,1:1,1:1,1:1 \right ] $:\\
\begin{equation}\label{eq:41}
	\begin{pmatrix}
		1&  1&  1&  1& 1\\
		1&  1&  2&  3& 3\\
		2&  3&  3&  5& 5\\
		2&  3&  4&  7& 7\\
		3&  4&  5&  4& 2\\
		3&  4&  6&  2& 4\\
		4&  2&  7&  8& 6\\
		4&  2&  8&  6& 8
	\end{pmatrix},
	\begin{pmatrix}
		1&  1&  1&  1& 1\\
		1&  1&  3&  2& 3\\
		2&  3&  5&  3& 5\\
		2&  3&  6&  5& 7\\
		3&  4&  2&  7& 4\\
		3&  4&  4&  8& 2\\
		4&  2&  7&  4& 8\\
		4&  2&  8&  6& 6
	\end{pmatrix},
	\begin{pmatrix}
		1&  1&  1&  1& 1\\
		1&  1&  3&  3& 3\\
		2&  3&  5&  5& 5\\
		2&  3&  7&  7& 6\\
		3&  4&  2&  4& 7\\
		3&  4&  4&  2& 8\\
		4&  2&  8&  6& 2\\
		4&  2&  6&  8& 4
	\end{pmatrix},
\end{equation}
\begin{equation}
	\begin{pmatrix}
		1&  1&  1&  1& 1\\
		1&  2&  3&  3& 3\\
		2&  3&  5&  5& 5\\
		2&  4&  7&  7& 7\\
		3&  3&  4&  2& 6\\
		3&  4&  2&  4& 8\\
		4&  1&  8&  6& 2\\
		4&  2&  6&  8& 4
	\end{pmatrix},
	\begin{pmatrix}
		1&  1&  1&  1& 1\\
		1&  2&  3&  3& 3\\
		2&  3&  5&  5& 5\\
		2&  4&  7&  7& 7\\
		3&  3&  2&  6& 4\\
		3&  4&  4&  8& 2\\
		4&  1&  8&  2& 6\\
		4&  2&  6&  4& 8
	\end{pmatrix},
	\begin{pmatrix}
		1&  1&  1&  1& 1\\
		1&  2&  3&  3& 3\\
		2&  3&  5&  5& 5\\
		2&  4&  7&  7& 7\\
		3&  3&  4&  2& 6\\
		3&  4&  4&  6& 2\\
		4&  2&  6&  7& 8\\
		4&  2&  8&  8& 6
	\end{pmatrix}.
\end{equation}\\
$\left [ 2:2,2:2 \right ] , \left [ 2:2,2:2 \right ] ,\left [ 2:2,2:2 \right ] , \left [ 1:1,1:1,1:1,1:1 \right ] , \left [ 1:1,1:1,1:1,1:1 \right ] $:
\begin{equation}\label{42}
	\begin{pmatrix}
		1&  1&  1&  1& 1\\
		1&  2&  3&  3& 3\\
		2&  3&  4&  5& 5\\
		2&  4&  2&  7& 7\\
		3&  4&  4&  2& 8\\
		3&  3&  2&  4& 6\\
		4&  2&  1&  6& 4\\
		4&  1&  3&  8& 2
	\end{pmatrix},
	\begin{pmatrix}
		1&  1&  1&  1& 1\\
		1&  2&  2&  3& 3\\
		2&  3&  3&  5& 5\\
		2&  4&  4&  7& 7\\
		3&  4&  3&  2& 4\\
		3&  3&  4&  4& 2\\
		4&  2&  1&  6& 8\\
		4&  1&  2&  8& 6
	\end{pmatrix},
	\begin{pmatrix}
		1&  1&  1&  1& 1\\
		1&  2&  2&  3& 3\\
		2&  3&  3&  5& 5\\
		2&  4&  4&  7& 7\\
		3&  4&  3&  2& 4\\
		3&  3&  4&  4& 2\\
		4&  2&  1&  8& 6\\
		4&  1&  2&  6& 8
	\end{pmatrix},
	\begin{pmatrix}
		1&  1&  1&  1& 1\\
		1&  3&  2&  3& 3\\
		2&  4&  3&  5& 5\\
		2&  2&  4&  7& 7\\
		3&  4&  4&  2& 8\\
		3&  2&  3&  6& 4\\
		4&  1&  2&  4& 6\\
		4&  3&  1&  8& 2
	\end{pmatrix}.
\end{equation}\\
$\left [ 2:2,2:2 \right ] , \left [ 2:2,2:2 \right ] ,\left [ 2:2,1:1,1:1 \right ] , \left [ 1:1,1:1,1:1,1:1 \right ] , \left [ 1:1,1:1,1:1,1:1 \right ] $:
\begin{equation}\label{43}
	\begin{pmatrix}
		1&  1&  1&  1& 1\\
		1&  2&  3&  3& 3\\
		2&  3&  5&  5& 5\\
		2&  3&  7&  7& 7\\
		3&  4&  6&  2& 4\\
		3&  4&  8&  4& 2\\
		4&  1&  2&  8& 6\\
		4&  2&  4&  6& 8
	\end{pmatrix},
	\begin{pmatrix}
		1&  1&  1&  1& 1\\
		1&  2&  3&  3& 3\\
		2&  3&  5&  5& 5\\
		2&  4&  2&  7& 7\\
		3&  3&  2&  4& 4\\
		3&  4&  4&  8& 2\\
		4&  1&  6&  2& 8\\
		4&  2&  1&  4& 6
	\end{pmatrix},
	\begin{pmatrix}
		1&  1&  1&  1& 1\\
		1&  2&  3&  3& 3\\
		2&  3&  5&  5& 5\\
		2&  4&  2&  7& 7\\
		3&  3&  2&  4& 6\\
		3&  4&  4&  8& 2\\
		4&  1&  6&  2& 8\\
		4&  2&  1&  6& 4
	\end{pmatrix},
\end{equation}
\begin{equation}\label{44}
	\begin{pmatrix}
		1&  1&  1&  1& 1\\
		1&  2&  3&  3& 3\\
		2&  3&  5&  5& 5\\
		2&  4&  3&  7& 7\\
		3&  3&  6&  4& 2\\
		3&  4&  4&  2& 4\\
		4&  1&  2&  8& 6\\
		4&  2&  1&  6& 8
	\end{pmatrix},
	\begin{pmatrix}
		1&  1&  1&  1& 1\\
		1&  2&  2&  3& 3\\
		2&  3&  5&  5& 5\\
		2&  4&  3&  7& 7\\
		3&  3&  6&  2& 4\\
		3&  4&  4&  4& 2\\
		4&  1&  2&  8& 6\\
		4&  2&  1&  6& 8
	\end{pmatrix},
	\begin{pmatrix}
		1&  1&  1&  1& 1\\
		1&  2&  3&  3& 3\\
		2&  3&  5&  5& 5\\
		2&  3&  6&  7& 7\\
		3&  4&  5&  4& 2\\
		3&  4&  6&  2& 4\\
		4&  1&  2&  8& 6\\
		4&  2&  4&  6& 8
	\end{pmatrix},
	\begin{pmatrix}
		1&  1&  1&  1& 1\\
		1&  2&  2&  3& 3\\
		2&  3&  5&  5& 5\\
		2&  4&  3&  7& 7\\
		3&  3&  6&  2& 4\\
		3&  4&  4&  4& 2\\
		4&  1&  2&  8& 6\\
		4&  2&  1&  6& 8
	\end{pmatrix},
	\begin{pmatrix}
		1&  1&  1&  1& 1\\
		1&  2&  2&  3& 3\\
		2&  3&  5&  5& 5\\
		2&  4&  3&  7& 7\\
		3&  3&  6&  4& 2\\
		3&  4&  4&  2& 4\\
		4&  1&  2&  8& 6\\
		4&  2&  1&  6& 8
	\end{pmatrix}.
\end{equation}\\
$\left [ 2:2,2:2 \right ] , \left [ 2:2,2:2 \right ] ,\left [ 2:2,1:1,1:1 \right ] , \left [ 2:2,1:1,1:1 \right ] , \left [ 1:1,1:1,1:1,1:1 \right ] $:
\begin{equation}\label{45}
	\begin{pmatrix}
		1&  1&  1&  1& 1\\
		1&  3&  2&  3& 3\\
		2&  4&  5&  5& 5\\
		2&  2&  3&  6& 7\\
		3&  1&  2&  4& 4\\
		3&  3&  1&  2& 2\\
		4&  4&  4&  6& 6\\
		4&  2&  6&  5& 8
	\end{pmatrix}.
\end{equation}\\
$\left [ 3:1,3:1 \right ] , \left [ 2:2,1:1,1:1 \right ] ,\left [ 2:2,1:1,1:1 \right ] , \left [ 2:2,1:1,1:1 \right ] , \left [ 1:1,1:1,1:1,1:1 \right ] $:
\begin{equation}\label{46}
	\begin{pmatrix}
		1&  1&  1&  1& 1\\
		1&  3&  2&  5& 3\\
		1&  4&  5&  2& 5\\
		2&  5&  3&  3& 7\\
		3&  3&  1&  4& 2\\
		3&  4&  6&  2& 8\\
		3&  6&  2&  1& 4\\
		4&  2&  4&  6& 6
	\end{pmatrix},
	\begin{pmatrix}
		1&  1&  1&  1& 1\\
		1&  2&  5&  3& 3\\
		1&  5&  2&  4& 5\\
		2&  3&  3&  5& 7\\
		3&  4&  2&  3& 4\\
		3&  2&  4&  4& 6\\
		3&  1&  1&  6& 2\\
		4&  6&  6&  2& 8
	\end{pmatrix}.
\end{equation}\\
$\left [ 3:1,3:1 \right ] , \left [ 2:2,1:1,1:1 \right ] ,\left [ 2:2,1:1,1:1 \right ] , \left [ 2:2,1:1,1:1 \right ] , \left [ 2:2,1:1,1:1 \right ] $:
\begin{equation}\label{47}
	\begin{pmatrix}
		1&  1&  1&  1& 1\\
		1&  3&  5&  2& 2\\
		1&  4&  2&  5& 5\\
		2&  5&  3&  3& 3\\
		3&  3&  2&  1& 4\\
		3&  4&  4&  6& 2\\
		3&  6&  1&  2& 1\\
		4&  2&  6&  4& 6
	\end{pmatrix}
\end{equation}\\
$\left [ 2:2,2:2 \right ] , \left [ 2:1,2:1,1:1 \right ] ,\left [ 2:1,2:1,1:1 \right ] , \left [ 2:1,2:1,1:1 \right ] , \left [ 2:1,2:1,1:1 \right ] $:
\begin{equation}\label{48}
	\begin{pmatrix}
		1&  1&  1&  1& 1\\
		1&  2&  3&  3& 3\\
		2&  3&  3&  5& 2\\
		2&  5&  2&  6& 4\\
		3&  6&  5&  4& 2\\
		3&  4&  6&  2& 4\\
		4&  6&  4&  2& 5\\
		4&  4&  2&  4& 6
	\end{pmatrix}.
\end{equation}\\
$\left [ 2:2,2:2 \right ] , \left [ 2:2,1:1,1:1 \right ] ,\left [ 2:1,2:1,1:1 \right ] , \left [ 2:1,2:1,1:1 \right ] , \left [ 1:1,1:1,1:1,1:1 \right ] $:
\begin{equation}\label{49}
	\begin{pmatrix}
		1&  1&  1&  1& 1\\
		1&  3&  5&  2& 3\\
		2&  5&  3&  3& 5\\
		2&  6&  5&  5& 7\\
		3&  6&  6&  6& 2\\
		3&  2&  7&  4& 4\\
		4&  3&  2&  6& 6\\
		4&  4&  4&  2& 8
	\end{pmatrix},
	\begin{pmatrix}
		1&  1&  1&  1& 1\\
		1&  3&  5&  2& 3\\
		2&  5&  3&  3& 5\\
		2&  3&  5&  5& 6\\
		3&  6&  6&  5& 2\\
		3&  2&  4&  6& 4\\
		4&  6&  6&  2& 7\\
		4&  4&  2&  4& 8
	\end{pmatrix},
	\begin{pmatrix}
		1&  1&  1&  1& 1\\
		1&  3&  5&  5& 2\\
		2&  5&  3&  3& 3\\
		2&  3&  5&  5& 4\\
		3&  6&  2&  6& 5\\
		3&  2&  6&  4& 6\\
		4&  6&  2&  6& 7\\
		4&  4&  4&  2& 8
	\end{pmatrix}.
\end{equation}\\
$\left [ 2:2,2:2 \right ] , \left [ 2:2,1:1,1:1 \right ] ,\left [ 2:2,1:1,1:1 \right ] , \left [ 2:1,2:1,1:1 \right ] , \left [ 2:1,2:1,1:1 \right ] $:
\begin{equation}\label{50}
	\begin{pmatrix}
		1&  1&  1&  1& 1\\
		1&  3&  5&  5& 2\\
		2&  5&  3&  3& 2\\
		2&  3&  5&  5& 1\\
		3&  6&  6&  2& 5\\
		3&  2&  4&  6& 6\\
		4&  6&  6&  2& 3\\
		4&  4&  2&  4& 4
	\end{pmatrix}.
\end{equation}\\
$\left [ 2:2,2:2 \right ] , \left [ 2:2,1:1,1:1 \right ] ,\left [ 2:2,1:1,1:1 \right ] , \left [ 1:1,1:1,1:1,1:1 \right ] , \\\left [ 1:1,1:1,1:1,1:1 \right ] $:
\begin{equation}\label{51}
	\begin{pmatrix}
		1&  1&  1&  1& 1\\
		1&  3&  2&  3& 3\\
		2&  5&  5&  5& 5\\
		2&  6&  3&  7& 7\\
		3&  5&  6&  2& 4\\
		3&  6&  4&  4& 2\\
		4&  4&  2&  6& 8\\
		4&  2&  1&  8& 6
	\end{pmatrix},
	\begin{pmatrix}
		1&  1&  1&  1& 1\\
		1&  3&  5&  5& 2\\
		2&  5&  5&  5& 5\\
		2&  6&  3&  7& 7\\
		3&  5&  6&  2& 4\\
		3&  6&  4&  4& 2\\
		4&  4&  2&  8& 6\\
		4&  2&  1&  6& 8
	\end{pmatrix}.
\end{equation}\\
$\left [ 2:2,2:2 \right ] , \left [ 2:2,1:1,1:1 \right ] ,\left [ 2:2,1:1,1:1 \right ] , \left [ 2:2,1:1,1:1 \right ] , \left [ 1:1,1:1,1:1,1:1 \right ] $:\\
\begin{equation}\label{52}
	\begin{pmatrix}
		1&  1&  1&  1& 1\\
		1&  3&  3&  2& 3\\
		2&  5&  5&  5& 5\\
		2&  7&  2&  3& 7\\
		3&  6&  2&  6& 4\\
		3&  8&  4&  4& 2\\
		4&  4&  1&  2& 6\\
		4&  2&  6&  1& 8
	\end{pmatrix}.\\
\end{equation}\\
$\left [ 3:2,2:1 \right ] , \left [ 2:2,1:1,1:1 \right ] ,\left [ 2:1,2:1,1:1 \right ] , \left [ 2:1,2:1,1:1 \right ] , \left [ 1:1,1:1,1:1,1:1 \right ] $:\\
\begin{equation}\label{53}
	\begin{pmatrix}
		1&  1&  1&  1& 1\\
		1&  5&  5&  3& 2\\
		1&  6&  2&  5& 3\\
		2&  3&  5&  2& 5\\
		2&  4&  3&  3& 7\\
		3&  5&  2&  4& 6\\
		3&  6&  4&  2& 4\\
		4&  2&  6&  6& 8
	\end{pmatrix}.\\
\end{equation}
$\left [ 2:2,1:1,1:1 \right ] , \left [ 2:2,1:1,1:1 \right ] ,\left [ 2:2,1:1,1:1 \right ] , \left [ 2:1,2:1,1:1 \right ] , \left [ 2:1,2:1,1:1 \right ] $:
\begin{equation}\label{54}
	\begin{pmatrix}
		1&  1&  1&  1& 1\\
		3&  5&  3&  2& 1\\
		5&  3&  5&  5& 2\\
		6&  1&  3&  6& 2\\
		7&  6&  5&  2& 5\\
		2&  4&  4&  3& 6\\
		8&  3&  2&  1& 3\\
		4&  2&  6&  4& 4
	\end{pmatrix},
	\begin{pmatrix}
		1&  1&  1&  1& 1\\
		3&  2&  1&  1& 1\\
		5&  5&  2&  5& 3\\
		6&  3&  2&  5& 3\\
		6&  4&  5&  2& 5\\
		4&  2&  6&  3& 4\\
		5&  6&  3&  4& 2\\
		2&  1&  4&  6& 6
	\end{pmatrix}.
\end{equation}

Next, we introduce the method we use to find these $5$-qubit UOMs of size eight in FIG \ref{fig:flowchart}. 
\begin{figure}[H]
	\centering
	\begin{minipage}[t]{0.4\textwidth}
		\scalebox{0.9}{%
			\begin{tikzpicture}[node distance=2cm, auto, on grid]
				\tikzset{
					startstop/.style={
						rectangle,
						rounded corners,
						minimum width=6cm,
						minimum height=1cm,
						align=center,
						draw=black,
						fill=white!30
					},
					io/.style={
						rectangle,
						minimum width=3cm,
						minimum height=1cm,
						align=center,
						draw=black,
						fill=white!30
					},
					process/.style={
						rectangle,
						minimum width=3cm,
						minimum height=1cm,
						align=center,
						draw=black,
						fill=white!30
					},
					decision/.style={
						diamond,
						minimum width=3cm,
						minimum height=1cm,
						align=center,
						draw=black,
						fill=white!30
					},
					arrow/.style={
						thick,
						->,
						>=stealth
					}
				}
				
				\node (start) [startstop,text width=8cm] {(i) Start};
				\node (input) [io, below=1.3cm of start,text width=8cm] {(ii) Inputting the size of the UOM that you want};
				\node (process1) [process, below=1.4cm of input,text width=8cm] {(ii) Using Theorem \ref{eq:5} to Calculate all the possible column feature lists that can satisfy the orthogonality};
				\node (process2) [process, below=1.5cm of process1,text width=8cm] {(iii) Preliminary examination of all the possible column feature lists};
				\node (process2a) [process, below =2cm of process2,text width=8cm]  {(iv) For every possible column feature list we obtain, checking any two columns of them. Obtaining linear integer diophantine equations. And applying the methods in \cite{frumkin1977polynomial,aardal2000solving,ruzsa1993solving} to solve them. Obtaining all the possible columns that may form a matrix};
				\node (process2c) [process, below= 2.2cm of process2a,text width=8cm] {(v) Using the Theorem \ref{the:UOM} to check all the columns we obtain and construct UOMs. Using the algorithm in section \ref{sec:85 upb} to check all the UOMs we obtain};
				\node (stop) [startstop, below= 1.6cm of process2c,text width=8cm] {Stop};
				
				\draw [arrow] (start) -- (input);
				\draw [arrow] (input) -- (process1);
				\draw [arrow] (process1) -- (process2);
				\draw [arrow] (process2) -- (process2a);
				\draw [arrow] (process2a) -- (process2c);
				\draw [arrow] (process2c) -- (stop);
				
			\end{tikzpicture}
		}
	\end{minipage}
	\captionsetup{justification=centering}
	\caption{Program Flowchart and Step Explanation.}
	\label{fig:flowchart}
\end{figure}
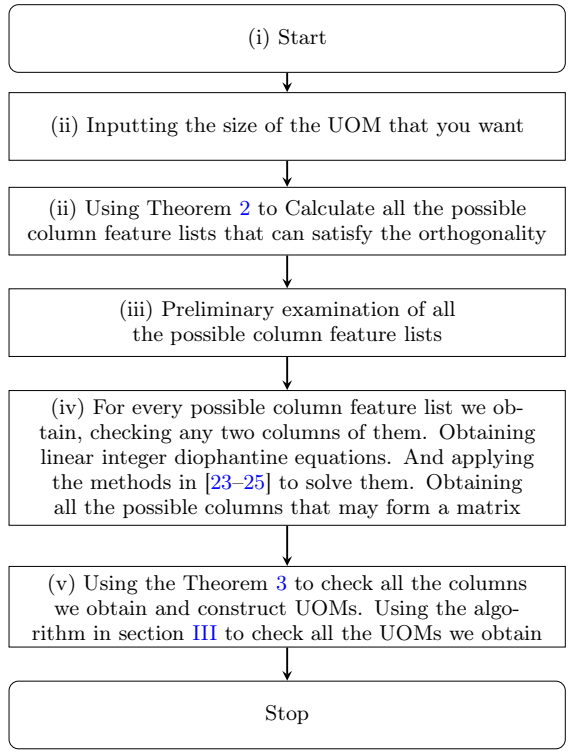\par
By observing the $5$-qubit UOMs of size eight, we obtain a theorem. One can use this theorem to construct a new UOM non-equivalent to a given one. 
\begin{theorem}\label{the:2123}
	We can construct a new UOM non-equivalent to a given UOM, if the UOM satisfies
	
	\par
	(i) A submatrix of the form $\begin{pmatrix}
		a& c\\
		c& a
	\end{pmatrix}$ appears in two columns of the UOM. The multiplicity of $a$ and $c$ are both $1$ in these two columns such that $a \ne c'$ and $a \ne c$.
	\par
	(ii) The $a'$s appear in the same rows in these two columns. The  $c'$s appear in the same rows in these two columns. Then, we can construct a new UOM by exchanging the two columns of $\begin{pmatrix}
		c& a\\
		a& c
	\end{pmatrix}$.

\end{theorem}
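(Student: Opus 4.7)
The plan is to let $B$ be obtained from $A$ by replacing the $2\times 2$ block $\begin{pmatrix} a & c \\ c & a \end{pmatrix}$ at rows $r_1, r_2$ and columns $i, j$ with $\begin{pmatrix} c & a \\ a & c \end{pmatrix}$, leaving every other entry of $A$ unchanged, and then to verify in turn that $B$ is orthogonal, that $B$ is unextendible, and that $B$ is not equivalent to $A$. For orthogonality I would examine each row pair of $B$. Pairs not involving $r_1$ or $r_2$ are unchanged and inherit orthogonality from $A$. For the pair $(r_1, r_2)$, the hypotheses $a \ne c$ and $a \ne c'$ imply that columns $i$ and $j$ did not witness orthogonality already in $A$, so the witness came from some other column, which is unchanged in $B$. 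For a pair $(r_1, r)$ with $r \notin \{r_1, r_2\}$ I would split on the column that witnessed orthogonality in $A$: if the witness is a column other than $i, j$ it is preserved; if it is column $i$ then row $r$ carries $a'$ there and, by hypothesis (ii), also $a'$ in column $j$, so the $a$ now at $(r_1, j)$ supplies orthogonality; the case in which column $j$ is the witness is symmetric using the co-location of the $c'$ rows, and the pair $(r_2, r)$ is symmetric to $(r_1, r)$.

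For unextendibility I would observe that the swap only permutes entries within columns $i$ and $j$, so the multiset of entries of every column of $B$ equals that of $A$; in particular the column feature list of $B$ coincides with that of $A$. Theorem \ref{the:UOM} characterises unextendibility through the inequality $\max\{\sum_{1}^{i} w_i\} \le m - n + i - 1$ on admissible multiplicity sequences $\{w_i\}$, and these sequences are built from the multiplicity data that the swap preserves, so every admissible sequence of $B$ still satisfies the inequality and $B$ is a UOM.

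The main obstacle is the non-equivalence claim. I would argue by contradiction: assume a combination of row permutation, column permutation, and renaming sends $A$ to $B$. The multiplicity-$1$ hypothesis on $a$ and $c$ in columns $i, j$ pins the images of these four distinguished entries, forcing any such equivalence to send $\{i, j\}$ to itself and $\{r_1, r_2\}$ to itself, so that its restriction to the distinguished $2\times 2$ block factors as a product of at most one column swap in $\{i, j\}$, at most one row swap in $\{r_1, r_2\}$, and renamings of $\{a, a', c, c'\}$. Hypothesis (ii) imposes the further constraint that any admissible renaming must act compatibly with the common distribution of $a'$ and $c'$ rows across columns $i, j$, and a finite enumeration of the resulting elementary operations shows that no such combination simultaneously converts $\begin{pmatrix} a & c \\ c & a \end{pmatrix}$ into $\begin{pmatrix} c & a \\ a & c \end{pmatrix}$ and fixes both the surrounding $a', c'$ pattern in columns $i, j$ and the entries of rows $r_1, r_2$ in the remaining columns. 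Making this enumeration fully rigorous in the presence of arbitrary other entries in columns $i, j$ is the delicate step, since renaming can permute those entries nontrivially; I expect this combinatorial case analysis, together with the accompanying verification that no extra row or column permutations outside of $\{r_1, r_2\}$ and $\{i, j\}$ can compensate, to be the main source of work in the proof.
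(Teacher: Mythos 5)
Your verification that the swapped matrix is still an orthogonal matrix is correct and in fact more careful than the paper's: the paper attributes preservation of orthogonality to hypothesis (i) alone, whereas, as you rightly observe, hypothesis (ii) is what rescues the pairs $(r_1,r)$ and $(r_2,r)$ whose only orthogonality witness in the original matrix was column $i$ or column $j$. For unextendibility the paper takes a different route from yours: it argues by contradiction using the sequences $\{w_i\}$ of Theorem \ref{the:UOM} --- a violating sequence for the new matrix either never selects the swapped entries, in which case it is also a sequence of the original matrix, or it does select one of them, in which case the multiplicity-one hypothesis forces $w_i=1$ and the original matrix can produce the same value. Your argument, that the swap preserves every column's multiset of entries and hence the ``multiplicity data'', is in the same spirit but elides the point that the sequences of Theorem \ref{the:UOM} are built by deleting \emph{rows}, and deleting row $r_1$ versus row $r_2$ leaves genuinely different residual matrices; the paper's two-case analysis is the device intended to handle exactly this (though it, too, is terse about what happens to the tail of the sequence).

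The genuine gap is the non-equivalence claim, which you candidly leave as an unfinished combinatorial enumeration. You should be aware that the paper's own proof does not establish non-equivalence either: it proves only that the swapped matrix is again a UOM and says nothing about why no combination of row/column permutations and renaming carries it back to the original; in practice this is delegated to the equivalence-checking algorithm of Section \ref{sec:85 upb} applied to the concrete examples. Your idea that the four multiplicity-one entries pin any putative equivalence to stabilize $\{i,j\}$ and $\{r_1,r_2\}$ is a reasonable starting point, but, as you acknowledge, renamings and the unconstrained entries of rows $r_1,r_2$ outside columns $i,j$ make the enumeration nontrivial, and without completing it (or adding hypotheses excluding symmetric configurations) the non-equivalence half of the theorem remains unproven in your write-up --- just as it does in the paper.
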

\begin{proof}
	Because of Theorem \ref{the:2123} (i), the rows of the UOM after exchanging the two columns of the submatrix  $\begin{pmatrix}
		a& c\\
		c& a
	\end{pmatrix}$ will still remain pairwise orthogonal. We will proof why the UOM we obtain is still unextendible.
	If the UOM we obtain is not a UOM, then because of Theorem \ref{the:UOM}, there exist a set $\left \{ w_i \right \}$ that the inequality does not hold. We have two cases $(a)$ and $(b)$ as follows.\par
	(a) Any $w_i$ in the set is not generated by $a$ and $c$ in these two columns. It is contradictive because the set can be generated from the UOM before exchanging.\par
	(b) If a $w_i$ is generated by $a$ and $c$ in these two columns, then because of Theorem \ref{the:2123} (ii), the multiplicity of $a$ and $c$ are both $1$ in these two columns, so the $w_i$ is 1, the UOM before exchanging can generate the same $w_i$.	 
\end{proof}

\section{conclusion}
\label{sec:con}
 We have designed a method to check if two UPBs are non-equivalent to each other. We have designed a method to find all non-equivalent UPBs of a distinct size and given all $8\times5$ UPBs as an example. This method would be helpful to other sizes too.
Our results have made some methods to help finding UPBs of a given size. It will be inefficient if the size of UPBs is large. Actually, we are not sure if there exists a $9$-qubit UPB of size $13$.


\section*{Acknowledgements}

Authors were supported by the NNSF of China (Grant No. 11871089).

\bibliographystyle{unsrt}
\bibliography{cankao}

\end{document}